\documentclass[runningheads,a4paper]{llncs}
\usepackage[margin=1.33in]{geometry}
\usepackage{amssymb}
\setcounter{tocdepth}{3}
\usepackage{graphicx}
\usepackage{verbatim}
\usepackage{mathrsfs}
\usepackage[ruled,vlined,lined,commentsnumbered,linesnumbered]{algorithm2e}
\usepackage{amsfonts}

\usepackage{amsthm}
\usepackage{ifpdf}
\usepackage{amsmath}
\usepackage[noend]{algpseudocode}
\usepackage{enumerate}
\usepackage{url}
\usepackage{color,soul}
\usepackage{hyperref}
\usepackage{cite}

\urldef{\mailsa}\path|{ibanerje, richards}@cs.gmu.edu|    
\newcommand{\keywords}[1]{\par\addvspace\baselineskip
\noindent\keywordname\enspace\ignorespaces#1}

\begin{document}

\mainmatter  

\title{Routing Number Of A Pyramid}

\titlerunning{Routing Number Of A Pyramid}

%
%
\author{Indranil Banerjee, Dana Richards}
\authorrunning{I  Banerjee, D Richards}

\institute{George Mason University\\ Department Of Computer Science\\ Fairfax Virginia 22030, USA\\
\mailsa}

%
%

\toctitle{Lecture Notes in Computer Science}
\tocauthor{Authors' Instructions}
\maketitle

\begin{abstract}
In this short note we give the routing number of pyramid graph under the \textit{routing via matching} model introduced by Alon et al\cite{5}. This model can be viewed as a communication scheme on a distributed network. The nodes in the network can communicate via matchings (a step), where a node exchanges data with its partner. Formally, given a connected graph $G$ with vertices labeled from $[1,...,n]$ and a permutation $\pi$ giving the destination of pebbles on the vertices the problem is to find a minimum step routing scheme. This is denoted as the routing time $rt(G,\pi)$ of $G$ given $\pi$. We show that a $d$-dimensional pyramid with $m$ levels has a routing number of $O(dN^{1/d})$.

\keywords{Routing, Pyramids, Permutations}
\end{abstract}

\section{Routing Number Of A Pyramid}
Originally introduced by Alon and others \cite{5} this problems explores permutation routing on graphs where routing is achieve through a  series of matchings called steps. Let $G$ be an undirected labeled graph with vertex labeled $i$ starting with a pebble labeled $\pi_i$, and the permutation $\pi$ gives the destinations of each pebble. The task is to route each pebble to their destination via a sequence of matchings. Given a matching we swap the pebbles on matched vertices. The \textit{routing time} $rt(G,\pi)$ is defined as the minimum number of steps, consecutive matchings, necessary to route all the pebbles for the given permutation. For given graph $G$, the maximum routing time over all permutations is called  the routing number $rt(G)$ of $G$. Permutation routing via matching has generated a literature that has focused on determining the routing numbers of special graphs such as trees, cycles, hypercubes etc. \cite{1,2,3,4}. In this short note we add to this literature by determining the routing number of the pyramid graph. As a consequence of our proof technique we also obtain the routing number of a multi-grid\cite{5}.

\
\begin{figure}[h]
	\includegraphics[width=4.5cm]{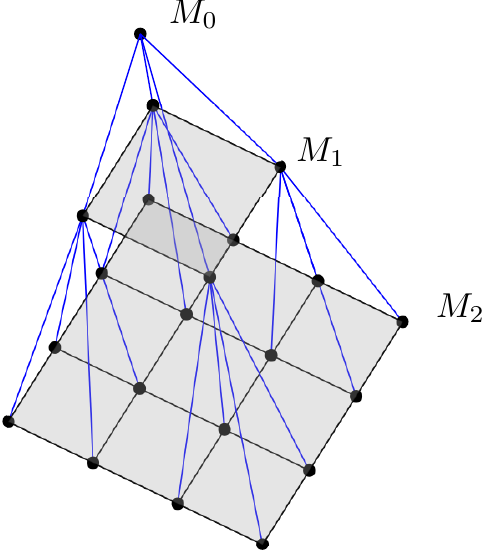}
	\centering
	\caption{A pyramid $\triangle_{3,2}$ in 3-dimension}
\end{figure}

A 1-dimensional pyramid with $m$-levels is defined as the complete binary tree of $2^{m} -1$ nodes, where the nodes in each level is connected by a path (i.e., a one-dimensional mesh). We treat the apex (root) to be at level 0 and subsequent levels are numbered in ascending order. A 2-dimensional  pyramid is shown in Figure 1. In this case each level $l$ is a square mesh of size $4^{l}$. Similarly a $d$-dimensional  pyramid having $m$ levels is denoted by $\triangle_{m,d}$ where the level $l$ is a $d$-dimensional regular mesh of length $2^{l}$ in each dimension. Clearly, $N = |\triangle_{m,d}|= \frac{2^{md}-1}{2^{d}-1}$ and the size of level numbered $l$ is $n_l = 2^{dl}$.  

\begin{theorem}\label{lemma: rt pyramid}
	For any pyramid $\triangle_{m,d}$, $rt(\triangle_{m, d}) = O(dN^{1/d})$.
\end{theorem}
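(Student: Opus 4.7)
The plan is to induct on the number of levels $m$, decomposing $\triangle_{m,d}$ into its bottom level $B$ (a $d$-dimensional mesh of side $n = 2^{m-1}$) and the upper sub-pyramid $U := \triangle_{m-1,d}$, joined by tree edges. Since the $d$-dimensional mesh of side $n$ has routing number $O(dn)$ by the standard mesh routing result, my target is the recurrence $T(m,d) \le T(m-1,d) + O(dn)$, which telescopes to $O(d \cdot 2^{m-1}) = O(dN^{1/d})$ using $2^{m-1} = \Theta(N^{1/d})$. A useful counting fact I will exploit is $|U| = (n^d - 1)/(2^d - 1) < n^d = |B|$: the upper part has strictly fewer nodes than the bottom level.

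The inductive step will run in three phases: (i) a \emph{separation} phase that rearranges pebbles so that every pebble destined for $B$ sits on some vertex of $B$, and every pebble destined for $U$ sits on some vertex of $U$; (ii) \emph{bottom-mesh routing} on $B$, costing $O(dn)$ steps; and (iii) \emph{upper recursion} on $\triangle_{m-1,d}$ at cost $T(m-1,d)$. Phases (ii) and (iii) operate on disjoint edge sets and can be executed in parallel, so together they contribute $\max(O(dn),T(m-1,d)) = O(dn)$ by the inductive hypothesis.

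The hard part is realizing the separation phase in $O(dn)$ steps, and this is where I expect most of the effort. The plan is first to apply Hall's theorem to the bipartite graph with edges $\{(v,b) : v \in U,\ b \in B,\ b \in \mathrm{desc}(v)\}$ to obtain an injection $f \colon U \to B$ sending each upper node to one of its own bottom descendants; Hall's condition follows from the counting identity above, which yields, for every upper subtree, strictly more bottom descendants than interior nodes. Next, pair each ``misplaced'' upper pebble (on $U$, destined for $B$) with a distinct misplaced bottom pebble (on $B$, destined for $U$); conservation guarantees the counts agree. A single mesh-routing pass on $B$ relocates the misplaced bottom pebbles into the shadow slots $\{f(v) : v \text{ holds a misplaced upper pebble}\}$ at cost $O(dn)$. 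Finally, swapping each pair along its tree path $v \leftrightarrow f(v)$ of length at most $m$ completes the separation; any tree edge lies on at most $m$ such paths (one per ancestor of its lower endpoint), so a straightforward level-by-level schedule runs in $O(m^2) = O(\log^2 N) = o(dN^{1/d})$ steps.

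Feeding this back into the recurrence gives $T(m,d) = O(dn) + T(m-1,d) = O(d \cdot 2^m) = O(dN^{1/d})$, as claimed. The same ``separate, then route the mesh and recurse in parallel'' template, applied to an arbitrary nested hierarchy of meshes in place of $U$, should yield the multi-grid result promised in the introduction. The principal obstacle will be orchestrating the concurrent tree-edge swaps during separation without letting edge or shadow-slot congestion inflate the bound: the Hall matching $f$ does the combinatorial heavy lifting, but scheduling the simultaneous swaps will need care.
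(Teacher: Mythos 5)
Your proposal is correct in outline but takes a genuinely different route from the paper. The paper does not recurse: it strips the pyramid down to a multi-grid $\triangle'_{m,d}$, decomposes $\pi$ into at most two involutions, and runs five alternating rounds---intra-level mesh routing in the odd rounds and, in the even rounds, simultaneous endpoint swaps along the disjoint vertical paths of the multi-grid. The combinatorial heart there is the count $|P_i|=\sum_{j>i}\mu_{ij}\le n_i\le 2\phi_{m-i-1}$: the number of pebble pairs that must move a pebble up to level $i$ is at most twice the number of available vertical paths of the required length, so two inter-level rounds suffice. That single inequality plays the role of both your Hall's-theorem injection and your $O(m^2)$ swap schedule. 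Your recursion buys a cleaner setup (no involution decomposition is needed, since you pair misplaced upper pebbles with misplaced lower pebbles by counting alone) at the price of the Hall argument, which does hold: each subtree has strictly more bottom leaves than internal nodes, and you should just note that the extension to arbitrary subsets of $U$ follows by passing to the ancestrally maximal elements, whose subtrees are disjoint. The scheduling worry you flag at the end is not a real obstacle---for a fixed level $i$ the paths $v\to f(v)$ lie in pairwise disjoint subtrees and are therefore vertex-disjoint, so the level-by-level schedule genuinely costs $O(m)$ per level and $O(m^2)=o(N^{1/d})$ overall. One caution: your closing claim that the same template gives the multi-grid result does not follow, because in the multi-grid each node keeps only one downward edge, the inter-level structure degenerates to a union of disjoint vertical paths rather than a branching $2^d$-ary tree, and Hall's condition for your injection $f$ fails there; this is exactly the case the paper's $\phi_k$ counting argument is designed to handle, and it is why the paper gets the multi-grid bound for free while you only get the pyramid.
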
 

\begin{proof}
	\begin{figure}[h]
		\includegraphics[width=4.5cm]{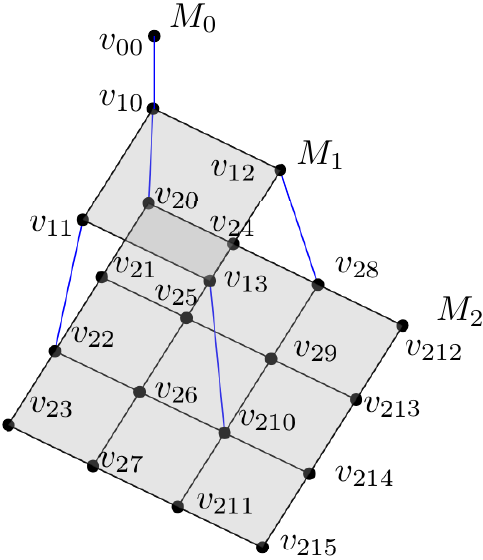}
		\centering
		\caption{The graph $\triangle'_{3,2}$ after stripping way edges from $\triangle_{3,2}$}
	\end{figure}
	
	Given the pyramid $\triangle_{3,2}$ consider a subgraph $\triangle'_{3,2}$ as shown in Figure 2. In literature this graph is sometimes refer to as a multi-grid, see for example \cite{5}.  As we move down from the apex we remove all but the ``first'' edge from the set of edges that connects a vertex to its neighbors in the level below. The remaining edges that connects two adjacent layers will be referred to as vertical edges. These edges can be grouped into disjoint vertical paths as shown by the blue lines in Figure 5.  The above construction naturally generalizes in higher dimensions. Clearly $rt(\triangle_{m,d}) \le rt(\triangle'_{m,d})$ where $ \triangle'_{m,d} $ is the multi-grid obtained from $ \triangle_{m,d} $.  We shall show $rt(\triangle'_{m,d}) = O(dN^{1/d})$.  
	
	Let $\pi$ be some input permutation. Without loss of generality we assume that $\pi$ consists only of 2-cycles or 1-cycles. From \cite{5} we know that any arbitrary permutation can be written as a composition of at most two such permutations. In order to route $\pi$ we first route the pebbles into their appropriate levels and then route within these levels. Routing consists of five rounds where in the odd numbered rounds we route within the levels and in the even numbered rounds we use the vertical paths to route between the levels. The first four rounds are used to move the pebbles to their appropriate destination level.
	
	 Let $v_{ij}$ be the $j^{th}$ node at level $i$, where $j \in [0, n_i-1]$. Let $\phi_k$ be the number of maximal vertical paths of length $k$. For example, in Figure 5, $\phi_2=1$ and $\phi_1=3$. In general in a $\triangle'_{m,d}$, $\phi_k = n_{m-k-1}-n_{m-k-2}$ for $k \in [1,m-2]$ and $\phi_{m-1}=1$. We group the cycles in $\pi$ based on their source and destination level (in case of 1-cycles the source and destination levels are the same). Let $P_{ij}$ ($i < j$) be the set of pebble pairs that need to be moved from level $i$ down to level $j$ and vice-versa and $P_{ii}$ be the set of pebbles that stay in level $i$. Let $\mu_{ij} = |P_{ij}|$. Let $P_i=\bigcup_{i<j}P_{ij}$ be the set of pebble pairs that move a pebble up to level $i$.
We shall only use disjoint vertical paths of length $m-i-1$ to route the pebbles in $P_{i}$.
During either even round each path of length $m-i-1$ will be used, for some $j$, to swap two pebbles between levels $i$ and $j$; all other pebbles on that path will not move. As an example consider the case in Figure 2. Suppose $\pi(v_{00}) = v_{21}$. Then during the intra-level routing on the first round we will move the pebble at $v_{21}$ to $v_{20}$. All intermediate nodes on this path, which in this case is just $v_{10}$ will be ignored (i.e., a pebble on these nodes will return to their original position at the end of the round). The four pebbles $\{v_{10},v_{11},v_{12},v_{13}\}$ will only use the three paths of length 1 to move to the bottom level (if necessary). In general $|P_i|=\sum_{j > i}{\mu_{ij}} \le n_i \le 2(n_{i}-n_{i-1}) = 2\phi_{m-i-1}$. Hence we need at most two rounds of routing along these vertical paths to move all pebbles in $P_i$.  
	 
	 Routing within the levels (which happens in parallel) is dominated by the routing number of the last level which is known to be $O(dn_{m}^{1/d}) = O(dN^{1/d})$ (for example we can use corollary 2 of theorem 4 in \cite{1}). Hence the three odd rounds take $O(dN^{1/d})$ in total. In the even rounds routing happens in parallel along the disjoint vertical paths. The routing time in this case is $O(m)$. Since, $N^{1/d} = \Omega(2^m)$, the even rounds do not contribute to the overall routing time, which remains $O(dN^{1/d})$, as claimed by the theorem.
	
\end{proof}

%
%
%

%
%
%


\end{document}